\documentclass[onecolumn,10pt]{IEEEtran}   	
\usepackage{graphicx}				
\usepackage{amsmath}								
\usepackage{amssymb}
\usepackage{bm}
\usepackage{amsthm}
\usepackage{algpseudocode}
\usepackage{setspace}
\usepackage{algorithm}

\usepackage{graphicx}
\usepackage{caption}
\usepackage{subcaption}


\newtheorem{definition}{Definition}
\newtheorem{lemma}{Lemma}
\newtheorem{theorem}{Theorem}

\newtheorem{construction}{Construction}

\allowdisplaybreaks[3]

\title{Generic Secure Repair for Distributed Storage}
\author{\IEEEauthorblockN{ \textbf{Wentao Huang} and \textbf{Jehoshua Bruck}}\\
	\IEEEauthorblockA{\emph{California Institute of Technology, Pasadena, USA}\\
		\{whuang,bruck\}@caltech.edu\ \
	}
}						

\begin{document}
\maketitle
\begin{abstract}
This paper studies the problem of repairing secret sharing schemes, i.e., schemes that encode a message into $n$ shares, assigned to $n$ nodes, so that any $n-r$ nodes can decode the message but any colluding $z$ nodes cannot infer any information about the message. In the event of node failures so that shares held by the failed nodes are lost, the system needs to be repaired by reconstructing and reassigning the lost shares to the failed (or replacement) nodes. This can be achieved trivially by a trustworthy third-party that receives the shares of the available nodes, recompute and reassign the lost shares. The interesting question, studied in the paper, is how to repair without a trustworthy third-party. The main issue that arises is repair security: how to maintain the requirement that any colluding $z$ nodes, including the failed nodes, cannot learn any information about the message, during and after the repair process? We solve this secure repair problem from the perspective of secure multi-party computation. Specifically, we design generic repair schemes that can securely repair \emph{any} (scalar or vector) linear secret sharing schemes. We prove a lower bound on the repair bandwidth of secure repair schemes and show that the proposed secure repair schemes achieve the optimal repair bandwidth up to a small constant factor when $n$ dominates $z$, or when the secret sharing scheme being repaired has optimal rate. We adopt a formal information-theoretic approach in our analysis and bounds. A main idea in our schemes is to allow a more flexible repair model than the straightforward one-round repair model implicitly assumed by existing secure regenerating codes.
Particularly, the proposed secure repair schemes are simple and efficient two-round protocols. 
\end{abstract}

\section{Introduction}
The problem of repairing secret sharing schemes has attracted significant interests recently. Specifically, a  secret sharing scheme encodes a message into $n$ shares, such that the message can be decoded from any $n-r$ shares (reliability), and that any $z$ shares are independent of the message (security). In the setting of distributed storage, a system consists of $n$ nodes and one share is assigned to each node. Therefore a secret sharing scheme can tolerate $r$ node failures (erasures) as well as $z$ colluding adversarial nodes trying to infer information about the message. In the event of node failures, the shares held by the failed nodes are lost and in order to maintain the same level of reliability, the system needs to repair the failures by reconstructing the lost shares and reassigning them to the failed (or replacement) nodes. Two problems arise during the repair process, namely, 1) bandwidth efficiency: it is desirable to minimize the amount of communication induced by the repair process; 2) repair security: the system needs to maintain the  security requirement that any colluding $z$ nodes, including the failed (or replacement) nodes, cannot infer any information about the message, during and after the repair process.

Secure regenerating codes, e.g., \cite{Pawar11,Shah:2011bz,Rawat:2014cu,Koyluoglu14,Kadhe14,Tandon16,HuangKun17}, are a class of secret sharing schemes that are carefully designed to address the above problems. We classify secure regenerating codes into two categories: codes that only address the bandwidth efficiency problem (i.e., codes with non-secure repair), and codes that address both the bandwidth efficiency and the repair security problems (i.e., codes with secure repair). Specifically, codes with non-secure repair focus on reducing the repair bandwidth without worrying about the security of the repair process. For example, the codes that tolerate Type-I adversary in \cite{Tandon16} and the codes in \cite{Huang17} belong to this category. For this case one can think of having a trustworthy repair dealer that will receive information from the available helper nodes, reconstruct the lost share and then forward it to the failed node. The repair dealer may receive enough information to gain knowledge of the message, and therefore has to be trustworthy. In comparison, regenerating codes with secure repair guarantee by code design that such a dealer will not  learn any information about the message. This in fact removes the need for the dealer to be trustworthy and the failed node can act as the dealer. Unfortunately, the guarantee that the dealer cannot learn any information about the message is shown to come at a high cost in rate \cite{Pawar11,Tandon16}, because more independent randomness (keys) is required in order to protect the message from the dealer, resulting in increased overhead. Therefore, codes with non-secure repair in general have a significantly better rate and repair bandwidth (when normalized by rate) than codes with secure repair.
 
In this paper we address the problem of repair security from a different perspective, without needing to take the heavy penalty in rate and other aspects of efficiency as in the case of secure regenerating codes. The key idea is that we allow a more flexible repair protocol: secure regenerating codes implicitly assume a simple ``one-round'' repair protocol, in which the helper nodes transmit information to the failed nodes but they themselves do not receive information from other nodes. This implicit ``one-round'' assumption is expensive in terms of efficiency. We show that, just by slightly relaxing this assumption and allowing a ``two-round'' protocol, it becomes possible to securely repair \emph{any} secret sharing scheme in a black-box manner, in the sense that the proposed repair protocol is generic and there is no need to design or modify the secret sharing scheme. Refer to Figure \ref{fig:1} for a simple example of the two-round secure repair protocol.

We remark that a two-round protocol is advantageous in that more nodes are allowed to receive information rather than only the failed node. This is intuitively beneficial because, if $d>z$ nodes can receive information, then we can take advantage of the gap between $d$ and $z$ in the following way. During the repair process, let the information received by any $z$ nodes be independent randomness (so that the security requirement is met), and let the information received by all $d$ nodes reveal useful information on the lost share. We then use an extra round of communication to transmit the information on and only on the lost share from the $d$ nodes to the failed node so that the lost share can be reconstructed. Loosely speaking, we can think of the repair process as letting the failed node ``compute'' its share securely, so that it only learns the share but nothing else. This is naturally related to the problem of secure multi-party computation and the ideas in \cite{BenOr88,Chaum88} play an important role in our repair schemes. We remark that we adopt a formal information-theoretic approach in our analysis and bounds, which differs from many existing works on secure multi-party computation.
We also note that relaxing the repair process to involve more than one round is practical. For example, POTSHARDS \cite{Storer09} employs a heuristic multi-round repair scheme to improve the security of the repair process.


Our generic secure repair schemes have two important advantages over secure regenerating codes with secure repair. First, the generic nature implies that there is no need to compromise the efficiency of the secret sharing scheme for secure repair. Here, aspects of efficiency at stake are not limited to the rate and repair bandwidth discussed earlier, but also include, for example, computational complexity \cite{HB16b} and decoding bandwidth \cite{HB16}, as it is not clear how to construct secure regenerating codes with optimal computation or decoding bandwidth. Second, most secure regenerating codes focus on secure repair by a fixed number of helper nodes. In the case that not enough helper nodes are available due to multiple node failures, it is not clear how secure repair can be achieved.

We briefly summarize the contributions of the paper. In Section \ref{sec:2}, we present a generic two-round secure repair scheme based on the ideas in \cite{BenOr88,Chaum88}. Specifically, in the first round each helper node encodes its share into $z+1$ pieces using a secret sharing scheme, so that any $z$ pieces reveal no information about the share and that the share can be decoded from $z+1$ pieces. The $z+1$ pieces are sent to $z+1$ receiver nodes, and each receiver node receives a piece from each helper node (if the helper node and the receiver node are the same node, then the corresponding piece needs not be transmitted). For example, in Figure \ref{fig:1}-(b), the helper nodes and receiver nodes are both Nodes 2 and 3. The set of pieces received by all receiver nodes contains enough information to decode the shares of all helper nodes and the lost share. We then need to communicate the information about the lost share, but no extra information about the shares of the helper nodes, to the failed node. To achieve this, each receiver node locally computes a function that takes the pieces received by the node as inputs, and outputs a ``distilled'' piece such that the set of ``distilled'' pieces only contains information about the lost share. This set is then transmitted from the receiver nodes to the failed node. Refer to Figure \ref{fig:1}-(c) for an example.

The generic repair scheme in Section \ref{sec:2} requires a relatively large repair bandwidth. In Section \ref{sec:3}, we reduce the repair bandwidth of the scheme significantly by adopting the idea of parallelism in \cite{Franklin92}. Instead of repairing one single share at a time, we repair multiple shares together in parallel,  therefore amortizing the communication overhead over the multiple shares. This is achieved by letting all $n$ nodes be receiver nodes (instead of $z+1$ nodes) and by using a secret sharing scheme of a higher rate in the first round. The larger gap between the number of receiver nodes and $z$ implies that we can encode more information in the secret sharing scheme (so that it has a higher rate) and can repair more shares in parallel.

The generic repair schemes in Sections \ref{sec:2} and \ref{sec:3} can securely repair any \emph{scalar} linear secret sharing schemes. A more general class of schemes are \emph{vector} linear secret sharing schemes. For a vector linear scheme over a field, each node stores multiple elements of the field instead of a single element as in the scalar linear case. Many efficient secret sharing schemes, e.g., schemes with efficient decoding bandwidth \cite{HB16,Huang17}, schemes with efficient computation \cite{HB16b,HB17sRAID}, and schemes with efficient repair bandwidth \cite{Shah:2011bz,Huang17}, are intrinsically vector linear. In Section \ref{sec:4} we generalize our secure repair schemes to generically repair any vector linear schemes. In particular, this generalization allows us to leverage the property of secret sharing schemes with efficient (non-secure) repair bandwidth, i.e., secure regenerating codes with non-secure repair, to further reduce the (secure) repair bandwidth.

Finally, in Section \ref{sec:5} we prove an information-theoretic lower bound on the repair bandwidth of secure repair schemes. The bound implies that the secure repair schemes in Sections \ref{sec:3} and \ref{sec:4} achieve the optimal repair bandwidth within a small constant factor when $n$ dominates $z$, or when the secret sharing scheme being repaired has optimal rate.



\begin{figure}[tp]
	\centering
	\begin{subfigure}[b]{1\textwidth}
		\centering
		\includegraphics[width=0.33\textwidth]{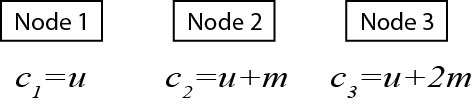}
		\caption{A secret sharing scheme over $\mathbb{F}_5$ with $r=1$ and $z=1$, where $m$ is a message symbol and $u$ is a random key uniformly distributed over $\mathbb{F}_5$. We denote the three shares by $c_1, c_2$ and $c_3$.}
	\end{subfigure}%
	\\
	\vspace{5mm}
	\begin{subfigure}[b]{1\textwidth}
		\centering
		\includegraphics[width=0.30\textwidth]{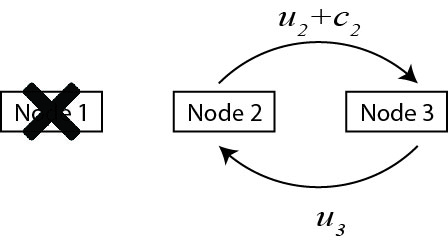}
		\caption{Repairing Node 1 (round 1): Node 2 generates a random symbol $u_2$ and sends $u_2+c_2$ to Node 3. Node 3 generates a random symbol $u_3$ and sends it to Node 2.}
	\end{subfigure}
	\\
	\vspace{5mm}
\begin{subfigure}[b]{1\textwidth}
	\centering
	\includegraphics[width=0.30\textwidth]{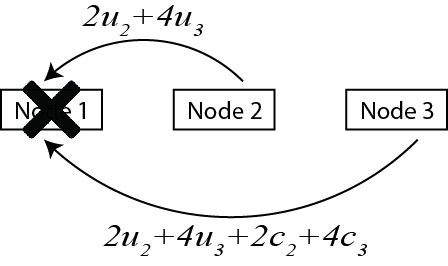}
	\caption{Repairing Node 1 (round 2): Node 2, having access to $u_2$ and $u_3$, computes and sends $2u_2 + 4u_3$ to Node 1. Node 3, having access to $u_2+c_2$, $u_3$ and $c_3$, computes and sends $2 u_2 + 4u_3 + 2c_2 + 4 c_3$ to Node 1. Node 1 can reconstruct its share since $c_1 = 2c_2 + 4c_3 $.}
\end{subfigure}
	\caption{Securely repairing a secret sharing scheme. Note that it is impossible to securely repair any node failure under the one-round repair model of regenerating codes, because for the failed node to reconstruct its share it has to collect the shares from the other two nodes, which will violate the security requirement. However, any node failure can be securely repaired by the two-round scheme shown above. To see that the scheme is secure, note that after the repair process Node 1 has access to $c_1$ and $2u_2 + 4u_3$; Node 2 has access to $c_2$, $u_2$ and $u_3$; Node 3 has access to $c_3$, $u_3$ and $u_2+c_2$. Therefore, any single node has access to only one share as well as some random symbols that are independent of the shares. Therefore no single node can learn any information about the message $m$.}
	\label{fig:1}
\end{figure}

\section{Generic secure repair} \label{sec:2}
Secret sharing schemes address the problem of storing a secret message securely and reliably. Specifically, an  $(n,k,r,z)$ secret sharing scheme over $\mathbb{F}_q$ is a randomized function that maps (encodes) a message $\bm{m}=(m_1,\cdots,m_k)$ of $k$ symbols over $\mathbb{F}_q$ to $n$ shares $\bm{c}=(c_1,\cdots,c_n)$ over $\mathbb{F}_q$, such that  1) $\bm{m}$ can be decoded from any subset of $ n-r$ shares; 2) any subset of $z$ shares do not reveal  information on $\bm{m}$.  Shamir's scheme is a well known secret sharing scheme with $k=1$.
\begin{construction} \label{con:sha}
	\emph{(Shamir's scheme \cite{Shamir:1979vo})} For any $n$, and $z<n$, let $k=1$, $r=n-z-1$ and $\mathbb{F}_q$ be a finite field of size $q>n$. Let $u_i$, $i \in [z]$ be i.i.d. uniformly distributed over $\mathbb{F}_q$ (also referred to as keys) and let $\alpha_i$, $i\in[n]$ be arbitrary distinct non-zero elements of $\mathbb{F}_q$. The shares corresponding to message $m_1$ are
	\begin{align} \label{eq:shamir}
	 (c_1, c_2, \cdots,c_n) = (m_1, u_1, u_2, \cdots, u_z) \left[ \begin{array}{cccc}
	 1 & 1 & \cdots & 1\\
	 \alpha_1 & \alpha_2 & \cdots & \alpha_n\\
	 \vdots & \vdots & \vdots & \vdots\\
	 \alpha_1^{z} & \alpha_2^{z} & \cdots & \alpha_n^{z}
	 \end{array} \right].
 	\end{align}
\end{construction}

\begin{lemma}\label{lem:linear}
	Let $c_i$, $i \in [n]$ be the shares of Shamir's scheme (\ref{eq:shamir}) on message $m_1$ and keys $u_i$, $i\in[z]$, and let $c'_i$, $i \in [n]$ be the shares of the scheme on message $m'_1$ and keys $u'_i$, $i\in[z]$. Then for arbitrary linear function $f: \mathbb{F}_q^2 \to \mathbb{F}_q$,  $f(c_i,c'_i)$, $i \in [n]$ are the shares of the scheme on message $f(m_1,m'_1)$ and keys $f(u_i,u'_i)$, $i \in [z]$.
\end{lemma}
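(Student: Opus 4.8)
The plan is to exploit the fact that Shamir's encoding (\ref{eq:shamir}) is nothing more than right-multiplication of the input row vector by a fixed matrix, so that the entire share-generation map is $\mathbb{F}_q$-linear in its input. First I would set up notation by writing $\bm{x} = (m_1, u_1, \ldots, u_z)$ and $\bm{x}' = (m'_1, u'_1, \ldots, u'_z)$ for the two input vectors, and letting $V$ denote the $(z+1)\times n$ Vandermonde-type matrix appearing in (\ref{eq:shamir}). Then by definition $\bm{c} = \bm{x} V$ and $\bm{c}' = \bm{x}' V$, where $\bm{c} = (c_1, \ldots, c_n)$ and $\bm{c}' = (c'_1, \ldots, c'_n)$.

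Next I would write the linear function explicitly as $f(x, y) = ax + by$ for some fixed $a, b \in \mathbb{F}_q$, since every $\mathbb{F}_q$-linear map $\mathbb{F}_q^2 \to \mathbb{F}_q$ has exactly this form. Applying $f$ coordinate-wise to the two share vectors gives $\big(f(c_1, c'_1), \ldots, f(c_n, c'_n)\big) = a\bm{c} + b\bm{c}'$. The crux is then a single invocation of the distributivity of matrix multiplication: $a\bm{c} + b\bm{c}' = a(\bm{x}V) + b(\bm{x}'V) = (a\bm{x} + b\bm{x}')V$.

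It remains to identify the input vector $a\bm{x} + b\bm{x}'$. Reading off its coordinates, the first is $am_1 + bm'_1 = f(m_1, m'_1)$ and the $(i+1)$-th is $au_i + bu'_i = f(u_i, u'_i)$ for $i \in [z]$. Hence $(a\bm{x} + b\bm{x}')V$ is, by the very definition (\ref{eq:shamir}), exactly the share vector of Shamir's scheme on message $f(m_1, m'_1)$ with keys $f(u_i, u'_i)$, which is precisely the claim.

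I expect no serious obstacle, as the statement is essentially a restatement of the linearity of the encoding map. The only point demanding a little care is the reading of ``linear function'': the argument above uses that $f$ is homogeneous, which is automatic for an $\mathbb{F}_q$-linear map. Were $f$ permitted an additive constant $d$, the all-ones first row of $V$ would inject a term $d\sum_{i=1}^z \alpha_j^i$ into each share and break the correspondence, so I would flag explicitly that $f$ is taken to be homogeneous linear.
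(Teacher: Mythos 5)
Your proof is correct and is precisely the argument the paper intends: the paper's own proof is the one-line remark ``follows from the linearity of (\ref{eq:shamir})'', and your write-up simply makes that linearity explicit via the matrix identity $a(\bm{x}V) + b(\bm{x}'V) = (a\bm{x}+b\bm{x}')V$. Your closing caveat that $f$ must be homogeneous (no additive constant) is a sensible clarification but does not change the substance.
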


\begin{proof}
	Follows from the linearity of (\ref{eq:shamir}).
\end{proof}

A secret sharing scheme allows secure and reliable storage of information, i.e., it can tolerate the loss of any $r$ shares as well as the exposure of any $z$ shares to an adversary. However, the problem of \emph{repair} is not addressed. Consider the situation that one or more shares are lost or unavailable, and so in order to maintain the same level of reliability one needs to reconstruct the lost shares. For example, in the application of storage, the $n$ shares are assigned to $n$ storage nodes, and in the situation of node failures, one wishes to repair the failures by reconstructing the shares originally assigned to the failed nodes. The repair problem can be easily solved if there is a trusted dealer, who can collect the available shares,  recompute the lost shares and reassign them to the failed or replacement nodes. However, the assumption of a trusted dealer responsible for centralized repair may not be practical in many applications. 

In this paper we study the situation that a trusted repair dealer is not available and the nodes holding the shares are responsible for carrying out the repair by themselves. A naive approach is to transmit the available shares to the failed node so that it can recompute its share. By doing so, however, information about the message may be leaked to the failed node. Therefore, the main question of interest is \emph{how to repair securely without a trusted dealer}. Below we utilize the ideas developed in secure multi-party computation \cite{BenOr88,Chaum88,Franklin92}, to design mechanisms to securely repair secret sharing schemes. We first formalize the notion of secure repair. Throughout the paper, for a vector such as $(m_1,\cdots,m_k)$ and an index set $I$, we denote $\{m_i:i\in I\}$ by $m_I$.

\begin{definition}\label{def:srepair}
	\emph{(Secure repair scheme)} Consider an $(n,k,r,z)$ secret sharing scheme and $n$ nodes such that node $i$ holds the share $c_i$. For any $e \in [n]$, and $I \subset [n]$, suppose that node $e$ fails and nodes in $I$ are available to help repairing node $e$. 	 
	 A secure repair scheme is a protocol of communication between the nodes, such that 1) the information sent by a node to other nodes is a function of the share it holds, its local coin flips, and the information it received from other nodes; 2) denote by $d_i$ all the information received by node $i$ from the protocol and denote by $u_i$ the result of coin flips at node $i$, then
	 \begin{itemize}
	 	\item (Repairability) $H(c_e|d_e) = 0$.
	 	\item (Security) $I(\bm{m}; c_A, u_A , d_  A) =0$, for all $A \subset [n]$, $|A| = z$.
	 \end{itemize}
\end{definition}

Note that Definition \ref{def:srepair} naturally extends the threat model of secret sharing, e.g., it maintains the security requirement that any $z$ nodes cannot learn any information about the message, during and after the repair process. 

\begin{construction} \label{con:sp1}
	\emph{(Generic secure repair)} Consider any $(n,k,r,z)$ secret sharing scheme, any $e \in [n]$, and any $I = \{i_1,\cdots,i_{|I|}\} \subset [n]$, $e \notin I$  such that there exists a linear function $f$ so that $  f(c_{i_1},c_{i_2},\cdots,c_{i_{|I|}}) = c_e$. Let $J=\{j_1, \cdots, j_{z+1}\}$ be an arbitrary subset of $[n]$ of size $z+1$. The secure repair process involves three steps:
	\begin{itemize}
		\item[1)] For each node $i \in I$, encode $c_i$ into $c_{i,1}, c_{i,2}, \cdots, c_{i,z+1}$ by a $(z+1,1,0,z)$ Shamir's scheme (in Construction \ref{con:sha} all nodes choose the same $\alpha_i$'s) and send $c_{i,k}$ to node $j_k \in J$.  
		\item[2)]  For each node $j \in J$, compute $c'_j = f(c_{i_1,j}, c_{i_2,j}, \cdots, c_{i_{|I|},j})$, and send $c'_j$ to node $e$.
		\item[3)] Node $e$ obtains $c_e$ by decoding the $(z+1,1,0,z)$ Shamir's scheme, regarding $c'_{j_1}, c'_{j_2}, \cdots, c'_{j_{z+1}}$ as the $z+1$ shares.
	\end{itemize}
\end{construction}

\begin{theorem}\label{thm:1}
	Construction \ref{con:sp1} is a secure repair scheme.
\end{theorem}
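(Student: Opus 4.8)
The plan is to verify the two requirements of Definition \ref{def:srepair}, namely repairability and security, for the protocol of Construction \ref{con:sp1}.

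\textbf{Repairability.} First I would upgrade Lemma \ref{lem:linear} from a binary linear function to an arbitrary $|I|$-ary linear function by a short induction: a general linear map is a composition of additions and scalar multiplications, and the conclusion of Lemma \ref{lem:linear} is preserved under both. Because in Step~1 every helper $i\in I$ encodes $c_i$ with the \emph{same} evaluation points $\alpha_1,\dots,\alpha_{z+1}$, the $|I|$ inner encodings are instances of one common $(z+1,1,0,z)$ scheme, with ``messages'' $c_{i_1},\dots,c_{i_{|I|}}$ and mutually independent keys. Applying the extended lemma to $c'_j=f(c_{i_1,j},\dots,c_{i_{|I|},j})$ then shows that $\{c'_j\}_{j\in J}$ are the $z+1$ shares of the inner scheme on the message $f(c_{i_1},\dots,c_{i_{|I|}})=c_e$. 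Since the inner scheme has $r=0$ and $|J|=z+1$, node $e$ receives a full set of shares in Step~2 and decodes $c_e$ in Step~3, giving $H(c_e\mid d_e)=0$.

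\textbf{Security.} Fix any $A\subset[n]$ with $|A|=z$; the goal is $I(\bm m; c_A, u_A, d_A)=0$. I would first catalogue $A$'s view. The coin flips $u_A$ are exactly the inner keys of the helpers in $A\cap I$, which are independent of $(\bm m, c_{[n]})$ by construction. The received data $d_A$ consists of the inner pieces $c_{i,a}$ delivered to receivers $a\in A\cap J$, together with the combined values $c'_J=\{c'_j\}_{j\in J}$ if (and only if) $e\in A$. For any helper $i\notin A$, at most $|A\cap J|\le z$ of its pieces are delivered directly to $A$; since distinct helpers use independent keys and any $z$ shares of the $(z+1,1,0,z)$ inner scheme are independent of the encoded value, the collection of all directly delivered pieces of out-of-$A$ helpers is independent of $(\bm m, c_{[n]})$. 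Thus, when $e\notin A$, the whole view is a deterministic function of $c_A$ and randomness independent of $(\bm m, c_{[n]})$, and the outer scheme's security $I(\bm m; c_A)=0$ for $|A|=z$ yields the claim.

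\textbf{Main obstacle.} The crux is the case $e\in A$, where node $e$ additionally observes all $z+1$ values $c'_J$, i.e.\ one more than the inner security threshold $z$. Writing $f(x_1,\dots,x_{|I|})=\sum_t\lambda_t x_t$ and subtracting the pieces of the helpers it owns, $A$ can isolate $\sum_{i_t\in I\setminus A}\lambda_t c_{i_t,j}$ for every $j\in J$, which are the full $z+1$ shares of the value $s:=\sum_{i_t\in I\setminus A}\lambda_t c_{i_t}$, so $A$ learns $s$. The resolution is the identity $c_e=f(c_{i_1},\dots,c_{i_{|I|}})=\sum_t\lambda_t c_{i_t}$, which gives $s=c_e-\sum_{i_t\in A\cap I}\lambda_t c_{i_t}$; since $e\in A$ we have $c_e\in c_A$ and each $c_{i_t}$ with $i_t\in A\cap I$ lies in $c_A$, so $s$ is a deterministic function of $c_A$. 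Consequently the only message-dependent content extracted from $c'_J$ is already determined by $c_A$, while its residual content (the combined inner keys) depends only on the inner keys. I would finish by packaging these observations into a single factorization, exhibiting $(c_A,u_A,d_A)$ as a function of $c_A$ and randomness independent of $(\bm m,c_{[n]})$, whence $I(\bm m; c_A,u_A,d_A)=I(\bm m; c_A)=0$. The most delicate part is exactly this packaging: one must check that the directly delivered pieces and the indirectly revealed combination $s$ do not jointly decode any individual out-of-$A$ share beyond what $c_A$ already determines, which I would argue from the sub-threshold ($\le z$) count of genuinely independent evaluations available for each such share.
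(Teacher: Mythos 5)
Your repairability argument and your treatment of the case $e\notin A$ coincide in substance with the paper's proof (the paper writes the latter as a chain of mutual-information steps, but the content --- at most $|A\cap J|\le z$ pieces exposed per out-of-$A$ helper, with independent keys across helpers --- is the same). Where you genuinely diverge is the crux case $e\in A$. The paper's trick is to use \emph{decodability} of the inner $(z+1,1,0,z)$ scheme: picking $j_1\in J\setminus A$ (nonempty since $|A|=z<|J|$), it shows $H(c'_{j_1}\mid c_e, c'_{J\setminus\{j_1\}})=0$ (its equation (\ref{eq:func})), so the extra combined share $c'_{j_1}$ can simply be deleted from the adversary's view because $c_e\in c_A$; what remains, $c'_{J\setminus\{j_1\}}$, is a function of the pieces $c_{I,J\setminus\{j_1\}}$, and Case 2 reduces verbatim to the $\le z$-pieces-per-helper counting of Case 1. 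You instead subtract the adversary's own helpers' contributions, isolate $\tilde{c}'_j=\sum_{i\in I\setminus A}\lambda_i c_{i,j}$ for $j\in J$, observe that these decode to $s=c_e-\sum_{i\in A\cap I}\lambda_i c_i$, which is a function of $c_A$, and argue the residual is independent key material. This route is valid and more operational --- it exhibits exactly what the adversary learns --- but the ``packaging'' step you flag as delicate is indeed the nontrivial point, and per-share counting alone does not finish it: you need a change of variables, e.g.\ fix $i_0\in I\setminus A$ with $\lambda_{i_0}\neq 0$ and replace its keys by the combined keys $\tilde{u}_l=\sum_{i\in I\setminus A}\lambda_i u_l^{(i)}$ (a linear bijection preserving uniformity and independence), after which the whole view $(c_A,u_A,d_A)$ becomes a deterministic function of $c_A$, of $\tilde{u}$, and of at most $|A\cap J|\le z$ evaluations for each remaining helper $i\neq i_0$, each of these blocks being jointly independent of $(\bm{m},\bm{c},u_A)$; then $I(\bm{m};c_A,u_A,d_A)=I(\bm{m};c_A)=0$ follows. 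In short: the paper's deletion trick buys brevity, sidestepping any joint analysis by folding Case 2 into Case 1, while your decomposition buys an explicit characterization of the adversary's knowledge at the cost of this one extra (but completable) factorization argument.
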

\begin{proof}
	We need to show that Construction \ref{con:sp1} meets the repairability and security requirements in Definition \ref{def:srepair}. By Lemma \ref{lem:linear}, $c'_{j_1}, c'_{j_2}, \cdots, c'_{j_{z+1}}$ are the shares of a $(z+1,1,0,z)$ Shamir's scheme that encodes $f(c_{i_1},c_{i_2},\cdots,c_{i_{|I|}}) = c_e$ as message. This proves repairability. We now focus on security. Let $A$ be an arbitrary set of nodes controlled by the adversary, with $|A|=z$. We consider two cases.
	
	\emph{Case 1: $e \notin A$.}  In this case $d_j = (c_{i_1,j}, c_{i_2,j}, \cdots, c_{i_{|I|},j})$ if $j \in J$, and $d_j = 0$ if $j \notin J$. Denote $c_{A,B} = \{c_{i,j}: i \in A, j \in B\}$, we have
	\begin{align}
		I(\bm{m}; c_A, u_A, d_A) & = I(\bm{m}; c_A, u_A, c_{I, J \cap A} ) \label{eq:start} \\
		& \stackrel{(a)}{=} I(\bm{m}; c_A, u_A, c_{I \backslash A, J \cap A} ) \nonumber \\
		& \stackrel{(b)}{=} I(\bm{m}; c_A, u_A | c_{I \backslash A, J \cap A}) + I(\bm{m}; c_{I \backslash A, J \cap A}) \nonumber \\
			& \stackrel{(c)}{\le} I(\bm{m}; c_A, u_A  | c_{I \backslash A, J \cap A}) + I(\bm{c}; c_{I \backslash A, J \cap A}) \nonumber \\
			& \stackrel{(d)}{=} I(\bm{m}; c_A, u_A | c_{I \backslash A, J \cap A}) \nonumber \\
			& \stackrel{(e)}{=} I(\bm{m}; c_A, u_A ) \nonumber \\
			& \stackrel{(f)}{=}  I(\bm{m}; c_A ) \nonumber \\
			& \stackrel{(g)}{=} 0 \label{eq:end}.
	\end{align}
	Here (a) is due to the fact that $c_{I \cap A, J}$ is a function of $c_A$ and $\bm{u}_A$; (b) follows from the chain rule; (c) follows from the data processing inequality and the Markov chain $\bm{m} \to \bm{c} \to c_{I \backslash A, J \cap A}$, i.e., $c_{I \backslash A, J \cap A}$ can be dependent on $\bm{m}$ only via $\bm{c}$; (d) follows from the fact that $c_{I \backslash A, J \cap A}$ are the shares of $|I \backslash A |$ independent $(z+1,1,0,z)$ secret sharing schemes and that for each scheme at most $|J \cap A| \le z$ of its shares are included; (e) follows from the fact that $(\bm{m}, c_A, u_A) \perp c_{I \backslash A, J \cap A}$, implied by (d); (f) follows from $(\bm{m}, c_A)\perp u_A$ ; and (g) follows from security of the secret sharing scheme being repaired.
	
	\emph{Case 2: $e \in A$}.  
	Since $|A|=z$ and $|J|=z+1$, $J \backslash A$ is not empty. Assume with out loss of generality that $j_1 \in J \backslash A$. Because $c'_{j_1}, c'_{j_2}, \cdots, c'_{j_{z+1}}$ are the shares of a $(z+1,1,0,z)$ Shamir's scheme that encodes $c_e$, it follows that $I(c_e; c'_{j_2}, c'_{j_3}, \cdots, c'_{j_{z+1}})=0$ and that there exists a linear function $g$ such that $g(c'_{j_1}, c'_{j_2}, \cdots, c'_{j_{z+1}}) = \sum_{k=1}^{z+1} g_k c'_{j_k}  =c_e$. This implies that $g_1 \ne 0$ and so $c'_{j_1} = ( c_e -  \sum_{k=2}^{z+1} g_k c'_{j_k}) g_1^{-1}$, namely,
	\begin{align}\label{eq:func}
		H(c'_{j_1} | c_e, c'_{J \backslash \{j_1\}})=0.
	\end{align}
	We have,
	\begin{align}
		I(\bm{m}; c_A, u_A, d_A) & = I(\bm{m}; c_A, u_A, c'_{J}, c_{I , A \cap J} ) \nonumber \\
		& \stackrel{(h)}{\le} I(\bm{m}; c_A, u_A, c'_J, c_{I , J\backslash\{j_1\}} ) \nonumber \\
		& \stackrel{(i)}{=} I(\bm{m}; c_A, u_A, c'_{J\backslash\{j_1\}}, c_{I , J\backslash\{j_1\}} ) \nonumber\\
		& \stackrel{(j)}{=} I(\bm{m}; c_A, u_A, c_{I , J\backslash\{j_1\}} ). \label{eq:con}
	\end{align}
	Here (h) follows from $A\cap J \subset J\backslash\{j_1\}$; (i) follows from (\ref{eq:func}); and (j) follows from the fact that $c'_{J\backslash\{j_1\}}$ is a function of $c_{I , J\backslash\{j_1\}}$.
	We continue the chain of inequality by treating (\ref{eq:con}) in a similar way as Case 1. Namely, applying an argument similar to that of (\ref{eq:start}) - (\ref{eq:end}), we have
	\begin{align}
		I(\bm{m}; c_A, u_A, d_A) & \le I(\bm{m}; c_A, u_A, c_{I , J\backslash\{j_1\}} ) \nonumber \\& = I(\bm{m};c_A, u_A, c_{I \backslash A, J\backslash\{j_1\}} ) \nonumber \\
		& = I(\bm{m}; c_A, u_A | c_{I \backslash A, J\backslash\{j_1\}}) + I(\bm{m};c_{I \backslash A, J\backslash\{j_1\}}) \nonumber \\
		& \le I(\bm{m}; c_A, u_A  | c_{I \backslash A, J\backslash\{j_1\}}) + I(\bm{c}; c_{I \backslash A, J\backslash\{j_1\}}) \nonumber \\
		& = I(\bm{m}; c_A, u_A | c_{I \backslash A, J\backslash\{j_1\}})  \nonumber \\
		& = I(\bm{m}; c_A, u_A ) \nonumber \\
		& =  I(\bm{m}; c_A ) \nonumber \\
		& = 0. \nonumber
	\end{align}
	The proof is complete.
\end{proof}
We remark that Construction \ref{con:sp1} is a generic scheme that can securely repair any linear secret sharing scheme. Particularly, it does not require modifying the secret sharing scheme. In a sense this suggests that secure repair ``comes for free'' without needing to compromise other aspects of efficiency of the scheme. In comparison, the secure regenerating codes in \cite{Pawar11,Shah:2011bz,Rawat:2014cu,Tandon16} allow secure repair at the cost of reducing rate. We also remark that multiple failures can be repaired securely by invoking Construction \ref{con:sp1}  multiple times.

We analyze the repair bandwidth, i.e., the total amount of information that is communicated during the repair process. In Step 1, at most $|I|(z+1)$ symbols are transmitted and in Step 2, at most $z+1$ symbols are transmitted. Therefore the total repair bandwidth is at most $(|I|+1)(z+1)$ symbols, which is approximately $z+1$ times of the non-secure repair bandwidth $|I|$.

\section{Reducing the secure repair bandwidth} \label{sec:3}
While Construction \ref{con:sp1} provides a generic approach to repair secret sharing schemes securely, it incurs a large overhead in the repair bandwidth. In this section we propose an improved generic secure repair scheme with a significantly better repair bandwidth. The main idea is that, instead of repairing one single share/symbol at a time, we repair multiple shares together in parallel, and therefore amortizing the communication overhead over the multiple shares. For this to work we need every node to store multiple shares, which is typically the case because the amount of information to be stored (e.g., a  file) usually exceeds the amount of information that can be stored by a single secret sharing scheme. Therefore the file will be split and stored by multiple independent instances of a secret sharing scheme, resulting in multiple shares to be assigned to a node. In the reminder of the paper we assume that there are  enough shares in the failed node to be repaired. Then, the main improvement is that in the first round of the repair scheme, rather than using a low rate $(z+1,1,0,z)$ secret sharing scheme, we use a high rate $(n,n-z,0,z)$ scheme. This allows one to repair $n-z$ shares in parallel and reduce the amortized overhead in the repair bandwidth (which are the $z$ keys in the secret sharing schemes of the first round) by $n-z$ times.

Formally, we assume that each node stores $n-z$ shares from $n-z$ independent instances of a secret sharing scheme. We use superscripts to index instances, e.g., $\bm{m}^{(i)} = (m_1^{(i)},\cdots,m_k^{(i)})$ is the message encoded by the $i$-th instance. In the first round of repair we use a high rate secret sharing scheme defined in Construction \ref{con:ramp}, which is a generalization of Shamir's scheme to the case of $k>1$.

\begin{construction} \label{con:ramp}
	\emph{(Ramp version of Shamir's scheme \cite{Franklin92, HB16})} For any $n$, $r$, $z$ such that $n>r+z$, let $k=n-r-z$ and $\mathbb{F}_q$ be a finite field of size $q>n$. Let $u_i$, $i \in [z]$ be i.i.d. uniformly distributed over $\mathbb{F}_q$ and let $\alpha_i$, $i\in[n]$ be arbitrary distinct non-zero elements of $\mathbb{F}_q$. The shares corresponding to message $\bm{m}=(m_1,m_2,\cdots,m_k)$ are
	\begin{align}
		(c_1, c_2, \cdots,c_n) = (m_1, \cdots, m_k, u_1, \cdots, u_z) \left[ \begin{array}{cccc}
			1 & 1 & \cdots & 1\\
			\alpha_1 & \alpha_2 & \cdots & \alpha_n\\
			\vdots & \vdots & \vdots & \vdots\\
			\alpha_1^{z+k-1} & \alpha_2^{z+k-1} & \cdots & \alpha_n^{z+k-1}
		\end{array} \right]. \nonumber
	\end{align}
\end{construction}
Construction \ref{con:ramp} is an $(n,k=n-r-z,r,z)$ secret sharing scheme \cite{HB16}. 

\begin{construction} \label{con:sp2}
		\emph{(Bandwidth-efficient secure repair)} Consider any $(n,k,r,z)$ secret sharing scheme, any $e \in [n]$, and any $I = \{i_1,\cdots,i_{|I|}\} \subset [n]$, $e \notin I$  such that there exists a linear function $f$ so that $  f(c_{i_1},c_{i_2},\cdots,c_{i_{|I|}}) = c_e$. The secure repair process involves three steps:
	\begin{itemize}
		\item[1)] For each node $i \in I$, encode $c_i^{(1)}, c_i^{(2)}, \cdots, c_i^{(n-z)}$ into $c_{i,1}, c_{i,2}, \cdots, c_{i,n}$ by a $(n,n-z,0,z)$ scheme according to Construction \ref{con:ramp} (all nodes should choose the same $\alpha_i$'s) and send $c_{i,j}$ to node $j$.  
		\item[2)]  For each node $j \in [n]$, compute $c'_j = f(c_{i_1,j}, c_{i_2,j}, \cdots, c_{i_{|I|},j})$, and send $c'_j$ to node $e$.
		\item[3)] Node $e$ obtains $c^{(1)}_e, c^{(2)}_e, \cdots, c^{(n-z)}_e$ by decoding the $(n,n-z,0,z)$ scheme, regarding $c'_1, c'_2, \cdots, c'_n$ as the $n$ shares.
	\end{itemize}
\end{construction}

\begin{theorem} \label{thm:2}
	Construction \ref{con:sp2} is a secure repair scheme.
\end{theorem}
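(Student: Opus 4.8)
The plan is to mirror the two-case argument used for Theorem \ref{thm:1}, adapting it to the parallel, higher-rate setting of Construction \ref{con:sp2}; throughout, $\bm{m}$, $\bm{c}$, and $c_A$ range over all $n-z$ instances. Two preliminary facts are needed. First, I would extend Lemma \ref{lem:linear} from Shamir's scheme to the ramp scheme of Construction \ref{con:ramp}: since Construction \ref{con:ramp} is again linear in the message/key vector, applying a linear $f$ coordinatewise to the shares of several instances yields the shares of the instance whose message and keys are the corresponding $f$-combinations. This gives repairability: by this extended linearity, $c'_1,\dots,c'_n$ are the $n$ shares of an $(n,n-z,0,z)$ scheme encoding $\bigl(c_e^{(1)},\dots,c_e^{(n-z)}\bigr)$; since node $e$ collects all $n$ of these shares and the scheme has $r=0$, it decodes all $n-z$ lost shares, so $H(c_e\mid d_e)=0$.

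Second, I would record the analog of (\ref{eq:func}) that drives the hard case. The round-2 shares satisfy $(c'_1,\dots,c'_n)=(c_e^{(1)},\dots,c_e^{(n-z)},v_1,\dots,v_z)G$, where $G$ is the $n\times n$ (invertible, Vandermonde) generator of Construction \ref{con:ramp} and $v_1,\dots,v_z$ are the $f$-combinations of the round-1 keys. The security of the ramp scheme is exactly the statement that, for every $z$-subset $A$, the submatrix of $G$ formed by the $z$ key-rows and the columns indexed by $A$ is invertible; hence the keys $v$, and therefore all of $c'_{[n]}$, are determined by $c_e^{(1)},\dots,c_e^{(n-z)}$ together with any $z$ of the shares. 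In particular $H\bigl(c'_{[n]\setminus A}\mid c_e^{(1)},\dots,c_e^{(n-z)},c'_A\bigr)=0$.

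For security, fix $A\subset[n]$ with $|A|=z$. In Case 1 ($e\notin A$) every node in $A$ only receives its round-1 pieces, so $d_A=c_{I,A}$ and $I(\bm{m};c_A,u_A,d_A)=I(\bm{m};c_A,u_A,c_{I,A})$. I would then run the identical chain (\ref{eq:start})--(\ref{eq:end}): drop $c_{I\cap A,A}$ as a function of $(c_A,u_A)$; split off $c_{I\setminus A,A}$ by the chain rule and the Markov chain $\bm{m}\to\bm{c}\to c_{I\setminus A,A}$; kill the second term because $c_{I\setminus A,A}$ consists of at most $z$ shares from each of $|I\setminus A|$ independent round-1 ramp schemes with fresh keys, hence is independent of $\bm{c}$; and finally use $(\bm{m},c_A)\perp u_A$ to reach $I(\bm{m};c_A)$. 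The latter vanishes because $c_A$ contains $z$ shares from each of the $n-z$ independent instances, each contributing zero by the security of the scheme being repaired, so independence across instances gives $I(\bm{m};c_A)=0$.

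In Case 2 ($e\in A$) node $e$ additionally receives all of $c'_{[n]}$, so $d_A=c'_{[n]}\cup c_{I,A}$. The crucial observation is that $c_e^{(1)},\dots,c_e^{(n-z)}\subseteq c_A$ because $e\in A$; combined with the determinism fact above, this lets me discard the redundant shares $c'_{[n]\setminus A}$ given $(c_A,c'_A)$, after which $c'_A=f(c_{I,A})$ is itself a function of $c_{I,A}$. Thus $I(\bm{m};c_A,u_A,d_A)=I(\bm{m};c_A,u_A,c_{I,A})$, which is exactly the Case 1 expression, and the same chain yields $0$. I expect the main obstacle to be the one genuinely new ingredient, namely establishing the Case 2 determinism statement in the right coordinates: one must invoke the equivalence between the ramp scheme's security and the invertibility of its $z\times z$ key-submatrices, and verify that the $n-z$ messages recovered at node $e$ are precisely the shares $c_e^{(1)},\dots,c_e^{(n-z)}$ already contained in $c_A$. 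Everything else is a coordinatewise lift of the scalar argument of Theorem \ref{thm:1}.
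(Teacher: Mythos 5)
Your proof is correct and follows essentially the same route as the paper: repairability via the linearity of the ramp scheme, and security via the same two-case split, in which Case 2 ($e\in A$) is reduced to Case 1 by the determinism fact $H\bigl(c'_{[n]\setminus A}\mid c_e^{[n-z]},c'_A\bigr)=0$ together with the observation that $c'_A$ is a function of $c_{I,A}$ — exactly the paper's steps (f) and (g). The only divergence is how that determinism fact is established: the paper counts entropies (using $H(c'_A)=z$ and $c_e^{[n-z]}\perp c'_A$), whereas you invert the $z\times z$ key-row submatrix of the Vandermonde generator; both arguments are valid.
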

\begin{proof}
	Similar to Theorem \ref{thm:1}, repairability follows from the linearity of Construction \ref{con:ramp}, which implies that $c'_{[n]}$ are the shares of a $(n,n-z,0,z)$ secret sharing scheme that encodes $ ( f(c_{i_1}^{(1)}, c_{i_2}^{(1)}, \cdots, c_{i_{|I|}}^{(1)} ),\cdots, f(c_{i_1}^{(n-z)}, c_{i_2}^{(n-z)}, \cdots, c_{i_{|I|}}^{(n-z)} ) ) = (c_e^{(1)}, \cdots, c_e^{(n-z)})$ as message. Focusing on security, let $A \subset [n]$, $|A|=z$ be an arbitrary set of nodes controlled by the adversary, then by the property of Construction \ref{con:ramp} it follows that $c_e^{[n-z]} \perp c'_A$ and $H(c'_A) = z$. We have
	\begin{align}
		H(\bm{c}'_{[n]\backslash A} | c_e^{[n-z]}, c'_A ) & \stackrel{(a)}{=} H(c_e^{[n-z]}, c'_{[n]}) - H(c_e^{[n-z]}, c'_A) \label{eq:h0begin}\\
		& \stackrel{(b)}{\le} H(c_e^{[n-z]})+z - H(c_e^{[n-z]}, c'_A) \nonumber \\
		& \stackrel{(c)}{=} H(c_e^{[n-z]})+z - H(c_e^{[n-z]}| c'_A) - H( c'_A) \nonumber \\
		& \stackrel{(d)}{=} H(c_e^{[n-z]})+z - H(c_e^{[n-z]}) - H( c'_A) \nonumber \\
		& = z - H( c'_A) \nonumber \\
		& \stackrel{(e)}{=} 0. \label{eq:h0}
	\end{align}
	Here (a) and (c) follows from the chain rule; (b) follows from the fact that $c'_{[n]}$ is a function of  $c_e^{[n-z]}$ and $z$ random keys; (d) follows from  $ c_e^{[n-z]} \perp c'_A$ and (e) follows from $H( c'_A) = z$. 
	
	Consider the case that $e \in A$, we have
	\begin{align}\label{eq:init2}
	I(\bm{m}^{[n-z]}; c_A^{[n-z]}, u_A, d_A) & = I(\bm{m}^{[n-z]}; c_A^{[n-z]}, u_A, c'_{[n]}, c_{I , A} )\\
	   & \stackrel{(f)}{=} I(\bm{m}^{[n-z]}; c^{[n-z]}_A, u_A, c'_A, c_{I , A} ) \nonumber \\
	& \stackrel{(g)}{=} I(\bm{m}^{[n-z]}; c_A^{[n-z]}, u_A,  c_{I , A} ), \label{eq:first2} \\
	& \stackrel{(h)}{=} I(\bm{m}^{[n-z]}; c^{[n-z]}_A, u_A, c_{I \backslash A, A} ) \nonumber \\
	& \stackrel{(i)}{=} I(\bm{m}^{[n-z]}; c^{[n-z]}_A, u_A | c_{I \backslash A, A}) + I(\bm{m}^{[n-z]}; c_{I \backslash A, A}) \nonumber \\
	& \stackrel{(j)}{\le} I(\bm{m}^{[n-z]}; c^{[n-z]}_A, u_A | c_{I \backslash A, A}) + I(\bm{c}^{[n-z]}; c_{I \backslash A, A}) \nonumber \\
	& \stackrel{(k)}{=} I(\bm{m}^{[n-z]}; c^{[n-z]}_A, u_A | c_{I \backslash A, A}) \nonumber \\
	& \stackrel{(l)}{=} I(\bm{m}^{[n-z]}; c^{[n-z]}_A, u_A ) \nonumber \\
	& \stackrel{(m)}{=}  I(\bm{m}^{[n-z]}; c_A^{[n-z]} ) \nonumber \\
	& \stackrel{(n)}{=} 0, \label{eq:last2}
	\end{align}
	where (f) follows from (\ref{eq:h0}); (g) follows from the fact that $c'_A$ is a function of $c_{I,A}$; (h) follows from the fact that $c_{A,A}$ is a function of $c_A^{[n-z]}$ and $u_A$; (i) follows from the chain rule; (j) follows form the Markov chain $\bm{m}^{[n-z]} \to \bm{c}^{[n-z]} \to c_{I \backslash A, A}$ and the data processing inequality; (k) follows from the fact that $c_{I \backslash A, A}$ are the shares of $|I \backslash A |$ independent $(n,n-z,0,z)$ secret sharing schemes and that for each scheme only $|A| = z$ of its shares are included; (l) follows from the fact that $(\bm{m}^{[n-z]}, c_A^{[n-z]}, u_A) \perp c_{I \backslash A, A}$, implied by (k); (m) follows from $(\bm{m}^{[n-z]}, c_A^{[n-z]})\perp u_A$; and (n) follows from security of the secret sharing scheme being repaired. 
	
	For the case that $e \notin A$, we have $I(\bm{m}^{[n-z]}; c_A^{[n-z]}, u_A, d_A) = I(\bm{m}^{[n-z]}; c^{[n-z]}_A, u_A,  c_{I , A} )$, which can be treated in the same way as (\ref{eq:first2}) - (\ref{eq:last2}). The proof is complete.
\end{proof}

In Step 1, at most $|I|n$ symbols are communicated and in Step 2, at most $n$ symbols are communicated. Therefore the total repair bandwidth is at most $(|I|+1)n$ symbols, for repairing $n-z$ symbols. The normalized repair bandwidth to repair each symbol is at most $\frac{(|I|+1)n}{n-z}$ symbols. In the case that $n$ dominates $z$, the normalized repair bandwidth approaches $|I|+1$ symbols. Note that $|I|$ is the non-secure repair bandwidth and a trivial lower bound on the secure repair bandwidth. Therefore when $n$ dominates $z$ (e.g., the high rate case), the secure repair bandwidth of Construction \ref{con:sp2} is essentially optimal. Specifically, it is essentially the same as the non-secure repair bandwidth, implying that we can have secure repair essentially for free, even in terms of repair bandwidth.

\section{Vector linear secure repair} \label{sec:4}
The secure repair schemes in Constructions \ref{con:sp1} and \ref{con:sp2} deal with scalar secret sharing schemes, i.e., schemes that are linear over a finite field and such that each share is an element of the field. A more general class of secret sharing schemes are vector linear secret sharing schemes, also referred to as array schemes.  A vector linear $(n,k,r,z)$ secret sharing scheme over $\mathbb{F}_q^t$ is a randomized function that maps (encodes) a message $\bm{m}=(m_1,\cdots,m_k)$ of $k$ symbols over $\mathbb{F}_q^t$ to $n$ shares $\bm{c}=(c_1,\cdots,c_n)$ over $\mathbb{F}_q^t$, such that the encoding function is linear over $\mathbb{F}_q$ and that the same reliability and security requirements as before are met. We denote $m_i = (m_{i,1}, m_{i,2}, \cdots, m_{i,t})$, where $m_{i,j} \in \mathbb{F}_q$, for $i \in [k], j \in [t]$. Similarly we denote $c_{i}=(c_{i,1}, c_{i,2}, \cdots, c_{i,t})$, for $i \in [n]$. Note that scalar schemes are special cases of vector schemes with $t=1$. 

Many efficient secret sharing schemes, e.g., schemes with efficient decoding bandwidth \cite{HB16,Huang17}, schemes with efficient computation \cite{HB16b,HB17sRAID}, and schemes with efficient repair bandwidth \cite{Shah:2011bz,Huang17}, are intrinsically vector linear. In this section we extend our secure repair framework to vector linear schemes. This is especially interesting because it allows us to leverage the property of secret sharing schemes with efficient (non-secure) repair bandwidth, i.e., secure regenerating codes, to further reduce the (secure) repair bandwidth.

We remark that existing secure regenerating codes can be classified into two categories: codes with non-secure repair and codes with secure repair. Secure regenerating codes with non-secure repair focus on reducing the repair bandwidth without worrying about the security of the repair process. In this case one can think of having a trustworthy repair dealer that will reconstruct the lost share and forward it to the failed node. As remarked previously, during the repair process the dealer may gain information about the message and therefore has  to be trustworthy.
In comparison, regenerating codes with secure repair, by code design, guarantee that such a dealer will not  learn any information about the message. This in fact removes the need for the dealer to be trustworthy and the failed node can act as the dealer. In this sense, secure regenerating codes with secure repair naturally admit a secure repair scheme that meets Definition \ref{def:srepair}. Particularly, the secure repair scheme is a simple ``one-round'' scheme in the sense that  the helper nodes will transmit information to the failed node but they themselves do not need to receive information from other nodes. Unfortunately, one-round secure repair comes at a high cost in rate and codes with non-secure repair generally have a much better rate as well as repair bandwidth when normalized by rate than codes with secure repair \cite{Pawar11, Tandon16}. Our main result in this section implies that this trade-off between rate and secure repair is not necessary: we can apply our generic approach to secure regenerating codes with non-secure repair to achieve secure repair, a good rate, and a good repair bandwidth. The only cost is that the repair process now involves two rounds instead of one round.

\begin{construction} \label{con:sp3}
	\emph{(Vector linear secure repair)} Consider any vector linear $(n,k,r,z)$ secret sharing scheme over $\mathbb{F}_q^t$, any $e \in [n]$, and any $I = \{i_1,\cdots,i_{|I|}\} \subset [n]$, $e \notin I$  such that there exists $J \subset [t]$ and a linear function $f$ over $\mathbb{F}_q$ that takes $c_{i,j}$, $i \in I$, $j \in J$ as input and outputs $c_e = (c_{e,1}, c_{e,2}, \cdots, c_{e,t})$. The secure repair process involves three steps:
	\begin{itemize}
		\item[1)] For each node $i \in I$, and $j \in J$, encode $c_{i,j}^{(1)}, c_{i,j}^{(2)}, \cdots, c_{i,j}^{(n-z)}$ into $c_{i,j,1}, c_{i,j,2}, \cdots, c_{i,j,n}$ by a $(n,n-z,0,z)$ scheme according to Construction \ref{con:ramp} (all nodes choosing the same $\alpha_1, \alpha_2 \cdots, \alpha_n$) and send $c_{i,j,k}$ to node $k$.  
		\item[2)]  For each node $k \in [n]$, compute $(c'_{k,1}, c'_{k,2}, \cdots, c'_{k,t}) = f(c_{i,j,k})_{i \in I, j\in J}$, and send $c'_{k,j}$, $j \in [t]$ to node $e$.
		\item[3)] For $j \in [t]$, node $e$ obtains $c^{(1)}_{e,j}, c^{(2)}_{e,j}, \cdots, c^{(n-z)}_{e,j}$ by decoding the $(n,n-z,0,z)$ scheme, regarding $c'_{1,j}, c'_{2,j}, \cdots, c'_{n,j}$ as the $n$ shares.
	\end{itemize}
\end{construction}

\begin{theorem} \label{thm:3}
	Construction \ref{con:sp3} is a secure repair scheme.
\end{theorem}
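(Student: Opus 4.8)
The plan is to mirror the proof of Theorem \ref{thm:2} almost verbatim, carrying the extra component index $j$ through the argument, since Construction \ref{con:sp3} is exactly the vector analog of Construction \ref{con:sp2}: for each pair $(i,j)$ with $i \in I$, $j \in J$ one scalar $(n,n-z,0,z)$ ramp scheme is run in round $1$, and the repair function $f$ is applied componentwise to the received pieces in round $2$. Write $f$ in coordinates, so that the round-2 symbol for output coordinate $j' \in [t]$ is $c'_{k,j'} = \sum_{i \in I, j \in J} f_{j',i,j}\, c_{i,j,k}$.

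First I would dispatch repairability. For fixed $j'$, the family $\{c'_{k,j'}\}_{k \in [n]}$ is an $\mathbb{F}_q$-linear combination of the $|I||J|$ independent round-1 ramp schemes $\{c_{i,j,k}\}_{k}$; by the linearity of Construction \ref{con:ramp} (the ramp extension of Lemma \ref{lem:linear}) it is therefore itself a $(n,n-z,0,z)$ ramp scheme whose message is the same combination of the individual messages. By the defining property of $f$ that combination is $(c_{e,j'}^{(1)}, \cdots, c_{e,j'}^{(n-z)})$, so decoding per coordinate recovers all of $c_e^{[n-z]}$.

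For security, fix an adversary set $A$ with $|A| = z$. The key lemma to establish is the vector analog of (\ref{eq:h0}): conditioned on $c_e^{[n-z]}$ and the round-2 symbols $c'_{A,[t]}$ held by $A$, the remaining round-2 symbols are determined, i.e. $H(c'_{[n]\backslash A,[t]} \mid c_e^{[n-z]}, c'_{A,[t]}) = 0$. I would argue this coordinatewise: for each $j' \in [t]$, $\{c'_{k,j'}\}_k$ is a ramp scheme with message $c_{e,j'}^{[n-z]}$, and the Vandermonde/MDS structure of Construction \ref{con:ramp} guarantees that the full message together with any $z$ shares determines the $z$ keys and hence all $n$ shares; summing the vanishing conditional entropies over $j' \in [t]$ yields the claim. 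With this lemma in hand, the computation proceeds exactly as (\ref{eq:init2})--(\ref{eq:last2}): in the case $e \in A$ we have $d_A = (c'_{[n],[t]}, c_{I,J,A})$ and $c_e^{[n-z]} \subseteq c_A^{[n-z]}$, so the lemma lets us drop $c'_{[n]\backslash A,[t]}$; then $c'_{A,[t]}$ is a function of $c_{I,J,A}$ and is dropped; then $c_{A \cap I,J,A}$ is a function of $(c_A^{[n-z]},u_A)$ and is dropped; and the residual $c_{I\backslash A,J,A}$ consists, for each $(i,j)$, of exactly $z$ shares of an independent ramp scheme, hence is independent of $\bm{c}^{[n-z]}$ and washes out via the Markov chain $\bm{m}^{[n-z]} \to \bm{c}^{[n-z]} \to c_{I\backslash A,J,A}$ and the data-processing inequality, leaving $I(\bm{m}^{[n-z]}; c_A^{[n-z]}) = 0$ by security of the scheme being repaired. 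The case $e \notin A$ is the same chain started one line in, since then $d_A = c_{I,J,A}$.

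The step I expect to require the most care is the independence lemma, because unlike the scalar Theorem \ref{thm:2} there are now $t$ round-2 ramp schemes (one per output coordinate) that are \emph{correlated}, being distinct $\mathbb{F}_q$-linear images of the same round-1 pieces. The point to get right is that I need not understand these correlations: the MDS/Vandermonde property is a per-coordinate structural fact (message plus any $z$ shares determine all shares), so conditioning jointly on $c_e^{[n-z]}$ and $c'_{A,[t]}$ and expanding $H(c'_{[n]\backslash A,[t]} \mid \cdot)$ by the chain rule over $j'$ makes every term vanish regardless of cross-coordinate dependence. A secondary point worth stating explicitly is that $f$ is linear over $\mathbb{F}_q$ (not over $\mathbb{F}_{q^t}$) and the ramp encoding is $\mathbb{F}_q$-linear, so applying $f$ componentwise to the pieces commutes with the encoding and Lemma \ref{lem:linear} applies coordinatewise; this commutation is precisely what makes the black-box reduction to Theorem \ref{thm:2} go through.
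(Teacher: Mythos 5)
Your proposal is correct and follows essentially the same route as the paper's proof: repairability via the linearity of Construction \ref{con:ramp}, a per-coordinate determinism lemma (the paper's (\ref{eq:h00}), combined over $j \in [t]$ into (\ref{eq:rh0})), and then the same chain of mutual-information reductions, split into the cases $e \in A$ and $e \notin A$, ending at the security of the scheme being repaired. The only cosmetic difference is that you justify the per-coordinate lemma by the Vandermonde/MDS fact that the message plus any $z$ shares determine the keys (hence all shares), whereas the paper re-runs the entropy-counting argument of (\ref{eq:h0begin})--(\ref{eq:h0}); both yield the same statement.
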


\begin{proof}
	As in Theorem \ref{thm:2}, repairability follows from the linearity of Construction \ref{con:ramp}, which implies that $c'_{1,j}, c'_{2,j}, \cdots, c'_{n,j}$ are the shares of a $(n,n-z,0,z)$ secret sharing scheme that encodes $c_{e,j}^{(1)}, c_{e,j}^{(2)}, \cdots, c_{e,j}^{(n-z)}$ as message, for $j \in [t]$. We now turn to security, and follow a similar flow as Theorem \ref{thm:2}. Let $A \subset [n]$, $|A|=z$ be an arbitrary set of nodes controlled by the adversary, then by the property of Construction \ref{con:ramp} it follows that $c_e^{[n-z]} \perp c'_{A,j}  = 0$ and $H(c'_{A,j}) = z$, for $j \in [t]$. We have, for $j \in [t]$,
	\begin{align}
		H( c'_{[n]\backslash A, j} | c_{e,j}^{[n-z]}, c'_{A,j} ) & = H(c_{e,j}^{[n-z]}, c'_{[n],j}) - H(c_{e,j}^{[n-z]}, c'_{A,j}) \nonumber \\
		& \le H(c_{e,j}^{[n-z]})+ z - H(c_{e,j}^{[n-z]}, c'_{A,j}) \nonumber \\
		& = H(c_{e,j}^{[n-z]}) + z - H(c_{e,j}^{[n-z]}|c'_{A,j}) - H(c'_{A,j}) \nonumber \\
		& = H(c_{e,j}^{[n-z]})+ z - H(c_{e,j}^{[n-z]}) - H(c'_{A,j}) \nonumber \\
		& = z - H(c'_{A,j}) \nonumber \\
		& = 0, \label{eq:h00}
	\end{align}
	where the justification for the steps are similar to that of (\ref{eq:h0begin}) - (\ref{eq:h0}). 
	(\ref{eq:h00}) implies that 
	\begin{align}\label{eq:rh0}
		H( c'_{ [n]\backslash A, [t] } | c_{e,[t]}^{[n-z]}, c'_{A,[t]} ) = 0.
	\end{align}
	Now consider the case that $e \in A$, we have
	\begin{align}
		I(\bm{m}^{[n-z]}; c_{A,[t]}^{[n-z]}, u_A, d_A) & = I(\bm{m}^{[n-z]}; c_{A,[t]}^{[n-z]}, u_A, c'_{[n],[t]}, c_{I , J, A} ) \nonumber \\
		& \stackrel{(a)}{=} I(\bm{m}^{[n-z]}; c^{[n-z]}_{A,[t]}, u_A, c'_{A,[t]}, c_{I , J, A} ) \nonumber \\
		& = I(\bm{m}^{[n-z]}; c_{A,[t]}^{[n-z]}, u_A,  c_{I , J, A} ), \label{eq:first3} \\
		& = I(\bm{m}^{[n-z]}; c^{[n-z]}_{A,[t]}, u_A, c_{I \backslash A, J, A} ) \nonumber \\
		& = I(\bm{m}^{[n-z]}; c^{[n-z]}_{A,[t]}, u_A | c_{I \backslash A, J, A}) + I(\bm{m}^{[n-z]}; c_{I \backslash A, J, A}) \nonumber \\
		& \le I(\bm{m}^{[n-z]}; c^{[n-z]}_{A,[t]}, u_A | c_{I \backslash A, J, A}) + I(\bm{c}^{[n-z]}; c_{I \backslash A, J, A}) \nonumber \\
		& \stackrel{(b)}{=} I(\bm{m}^{[n-z]}; c^{[n-z]}_{A,[t]}, u_A | c_{I \backslash A, J, A}) \nonumber \\
		& = I(\bm{m}^{[n-z]}; c^{[n-z]}_{A,[t]}, u_A ) \nonumber \\
		& =  I(\bm{m}^{[n-z]}; c_{A,[t]}^{[n-z]} ) \nonumber \\
		& = 0, \label{eq:last3}
	\end{align}
	where (a) follows from (\ref{eq:rh0}); (b) follows from the fact that $c_{I \backslash A ,J,A}$ are the shares of $|I \backslash A | \cdot |J|$ independent $(n,n-z,0,z)$ secret sharing schemes and that for each scheme only $|A| = z$ of its shares are included; and the remaining equalities/inequalities are similar to (\ref{eq:init2}) - (\ref{eq:last2}).
	
	For the case that $e \notin A$, we have $I(\bm{m}^{[n-z]}; c_{A,[t]}^{[n-z]}, u_A, d_A) = I(\bm{m}^{[n-z]}; c^{[n-z]}_{A,[t]}, u_A,  c_{I , J, A} )$, which can be treated in the same way as (\ref{eq:first3}) - (\ref{eq:last3}). The proof is complete.
\end{proof}

Consider the repair bandwidth of the scheme. In Step 1, at most $n|I||J|$ symbols (over $\mathbb{F}_q$) are transmitted and in Step 2, at most $nt$ symbols are transmitted. Therefore, the total repair bandwidth is at most $(|I||J|+t)n$ symbols, for repairing $(n-z)t$ symbols. The normalized repair bandwidth to repair each symbol is at most $\frac{(|I||J|+t)n}{(n-z)t}$ symbols. In the case that $n$ dominates $z$, the normalized repair bandwidth approaches $\frac{|I||J|}{t}+1$. Note that the normalized non-secure repair bandwidth is $\frac{|I||J|}{t}$, and therefore in this case the the secure repair bandwidth of Construction \ref{con:sp3} is essentially optimal and is almost the same as the non-secure repair bandwidth.

The MSR secure regenerating codes in \cite{Shah:2011bz,Huang17} have optimal rate as well as optimal non-secure repair bandwidth (among rate-optimal schemes). Specifically, the rate of the scheme is $\frac{n-k-z}{n}$, and that for $|I|$ helper nodes to non-securely repair a failed node, each helper node will transmit $1/(1+|I|-k-z)$ fraction of the symbols it stores, i.e., $|J| = \frac{t}{1+|I|-k-z}$. By applying Construction \ref{con:sp3} to these codes, we obtain schemes with optimal rate and low secure repair bandwidth. In the next section we will show that the secure repair bandwidth is in fact optimal up to a small constant factor.



\section{Lower bound on secure repair bandwidth} \label{sec:5}
The bandwidths of Construction \ref{con:sp2} and Construction \ref{con:sp3} are significantly better than Construction \ref{con:sp1}. A natural question is whether it is possible to do even better, or in other words, what is a lower bound on the secure repair bandwidth. As we previously remarked, when $n$ dominates $z$, the  bandwidths of Constructions \ref{con:sp2} and \ref{con:sp3} approach the non-secure repair bandwidth, which is a naive lower bound. Therefore in this case the bandwidths of Constructions \ref{con:sp2} and \ref{con:sp3} are asymptotically optimal. In this section, we prove a stronger lower bound on the secure repair bandwidth and show that the  bandwidths of Constructions \ref{con:sp2} and \ref{con:sp3} are  optimal for all parameters up to a constant factor of 2, as long as the secret sharing scheme being repaired is rate-optimal. 

Assume that a trustworthy repair dealer is available. The dealer will receive information from the helper nodes, evaluate a \emph{repair function} that outputs the lost share, and reassign the share. In this case, the repair bandwidth is the size of the input to the repair function plus the size of the lost share. Now consider the situation that a trustworthy dealer is not available and a secure repair scheme is used for repair. The secure repair scheme essentially is a method to evaluate the repair function (e.g., $f$ in Constructions \ref{con:sp1}, \ref{con:sp2} and \ref{con:sp3}) securely at the failed node, and the repair bandwidth again depends on the size of the input to the repair function. The repair function is an intrinsic component of the secret sharing scheme and the size of the input can be minimized by carefully designing the secret sharing scheme, e.g., \cite{Shah:2011bz,Huang17}. Refer to the size of the input to the repair function as the non-secure repair bandwidth of a secret sharing scheme. Below we prove a lower bound on the repair bandwidth of secure repair schemes, given the non-secure repair bandwidth of the secret sharing scheme.

\begin{theorem} \label{thm:lbnd}
	For any rate-optimal $(n,k=n-r-z,r,z)$ secret sharing scheme, let $W$ be the non-secure repair bandwidth of the scheme, then a secure repair scheme requires a bandwidth of at least $\frac{(n-1)W}{2(n-z-1)}$. 
\end{theorem}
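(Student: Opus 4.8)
The plan is to lower bound the secure repair bandwidth by relating it to the size of the information that must flow into the failed node, and then combining this with the security constraint that any $z$ nodes (including the failed node $e$) learn nothing about the message. Let me think through what quantities matter.

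The secure repair bandwidth is the total number of symbols communicated during the protocol. We want to show it is at least $\frac{(n-1)W}{2(n-z-1)}$.

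Let me understand $W$. For a rate-optimal scheme, $k = n-r-z$, and $W$ is the non-secure repair bandwidth — the minimum input size to a repair function that outputs $c_e$ from helper information.

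**Key ideas I'd pursue.**

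First, I'd establish a lower bound on how much the failed node $e$ must receive: $d_e$ must satisfy $H(c_e | d_e) = 0$, so $H(d_e) \geq H(c_e)$. But more subtly, security says node $e$ together with any $z-1$ other nodes learns nothing about $\bm{m}$; this constrains the incoming traffic to $e$ to carry "just enough" about $c_e$ but nothing extra about $\bm{m}$.

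Second — and this is where the factor $\frac{n-1}{n-z-1}$ should come from — I'd argue about the *total* traffic, not just traffic into $e$. The non-secure bandwidth $W$ measures what a trusted dealer needs as input. In a secure scheme, the information that collectively determines $c_e$ must be distributed so that no $z$-subset learns $\bm{m}$; this forces redundancy. I expect the argument to fix an adversarial set $A$ of size $z$ containing $e$, and bound the information crossing the cut between $A$ and $[n]\setminus A$ from both sides, using that $H(c_e|d_e)=0$ while $I(\bm{m}; c_A, u_A, d_A) = 0$. The denominator $2(n-z-1)$ suggests splitting the $n-1$ potential helpers into groups and averaging a per-node or per-cut bound over the $n-1$ available nodes, with the factor $2$ absorbing a two-sided (send/receive) counting of each transmitted symbol.

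**The main obstacle.**

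The hard part will be converting the security constraint into a quantitative lower bound on communicated entropy rather than merely a qualitative independence statement. Specifically, I'd need to show that to deliver $H(c_e)$ worth of decodable information about $c_e$ to node $e$ while keeping every $z$-subset ignorant of $\bm{m}$, the protocol must "mask" the transmitted symbols with independent randomness, and this masking inflates the traffic by the factor $\frac{n-1}{n-z-1}$. Concretely, I would relate $W$ to $H(c_e)$ for a rate-optimal scheme (where the repair function's input size $W$ and the share entropy are tied together), then run a cut-set / entropy-counting argument: sum over the $n-1$ possible helper-or-receiver roles, invoke $H(d_e)\ge H(c_e)$ together with the independence forced on any size-$z$ view, and divide by $n-z-1$ (the gap exploited by the two-round scheme) to extract the per-symbol bound. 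The factor $2$ I would recover by noting each communicated symbol can be charged once as an outgoing and once as an incoming transmission, or by a symmetrization over which node plays the failed role. I would lean on Definition 1's two conditions as the only structural hooks, since the bound must hold for *every* secure repair scheme, not just Constructions 2–4.
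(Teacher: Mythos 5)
Your proposal correctly anticipates two peripheral features of the argument --- the factor of $2$ coming from counting each transmitted symbol once at the sender and once at the receiver, and the need to average some per-node bound over the $n-1$ helpers --- but it leaves the core of the proof as an acknowledged ``main obstacle,'' and the route you sketch for overcoming it would not work. Anchoring the bound on the failed node's incoming traffic via $H(d_e)\ge H(c_e)$, or on the cut between a $z$-set $A\ni e$ and its complement, can only produce bounds on the order of the share size $H(c_e)$; it cannot produce a bound proportional to $W$, since $W$ and $H(c_e)$ are not ``tied together'' as you claim (for a scalar rate-optimal scheme $W \ge k+z$ symbols while $H(c_e)=1$ symbol, and for MSR-type vector schemes the gap between download and share size is the entire point of the construction). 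Moreover, the cut around $A\ni e$ yields no tension to exploit: $d_e$ must determine $c_e$, but $c_e$ by itself is independent of $\bm{m}$, so security places no quantitative burden on that cut beyond $H(c_e)$.

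The paper's proof instead does per-helper accounting. For each helper $i$, let $w_i$ be the input that node $i$ would feed to the (non-secure) repair function, so $\sum_i H(w_i)=W$, $w_i$ is a function of $c_i$, and $w_i$ is recoverable from node $i$'s total protocol traffic $v_{i,[n]}$. The key lemma is that for \emph{every} $z$-subset $A$ with $i\notin A$, the traffic $v_{i,A}$ between node $i$ and $A$ satisfies $I(w_i;v_{i,A})=0$. This is where rate-optimality enters: it guarantees that any $k+z$ shares are uniformly distributed (so $c_i\perp c_A$, giving $H(w_i)=H(w_i|c_A)$) and that any further $k$ shares are determined by $(\bm{m},c_A)$, which lets one upgrade the repair-security condition $I(\bm{m};v_{i,A},c_A)=0$ into the independence of $w_i$ from $v_{i,A}$. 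Then, taking $A^*$ to be the $z$-subset maximizing $\sum_{l\in A}H(v_{i,l})$: the links to $A^*$ carry nothing about $w_i$, so the remaining $n-z-1$ links must carry all of $H(w_i)$, forcing some link --- hence, by maximality, every link into $A^*$ --- to have entropy at least $H(w_i)/(n-z-1)$; altogether $\sum_j H(v_{i,j})\ge \frac{(n-1)H(w_i)}{n-z-1}$. Summing over the $n-1$ helpers and halving for the double count gives $\frac{(n-1)W}{2(n-z-1)}$. This independence lemma together with the maximizing-subset device is precisely the quantitative conversion of the security constraint that your plan identifies as missing but does not supply.
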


\begin{proof}
	Note that $k=n-r-z$ implies that the scheme is rate-optimal \cite{HB16}. Let the message $\bm{m}= (m_1,m_2,\cdots,m_k)$ be uniformly distributed. Then for any $I \subset [n]$, $|I|=k+z$ and $J \subset I$, $|J|=z$, by the security and the decodability of the scheme we have $I( \bm{m} ; c_{J}) = 0 $ and $I( \bm{m} ; c_I)=H(\bm{m})=k$. It follows that
	\begin{align}
		I( \bm{m} ; c_{I \backslash J} | c_J) & = H(\bm{m} | c_J) - H( \bm{m} |c_I ) \nonumber \\
		 & = H(\bm{m}) - H( \bm{m} |c_I ) \nonumber \\
		 & = I(\bm{m}; c_I)  \nonumber \\
		 & = k. \label{eq:cijk}
	\end{align}
	Since $|I \backslash J| = k$, $H(c_{I \backslash J}) \le k$, and hence (\ref{eq:cijk}) implies that $H(c_{I \backslash J}) = k$ and  $c_{I \backslash J} \perp c_J$ and that
	\begin{align} \label{eq:cIJ0}
		H( c_{I \backslash J} | \bm{m}, c_J) =0.
	\end{align}
	Therefore among the $n$ shares of the secret sharing scheme, any $|I \backslash J|=k$ shares are uniformly distributed and that any $|J|=z$ shares are independent of any other $k$ shares. This in turn implies that any $k+z$ shares are uniformly distributed, i.e., 
	\begin{align} \label{eq:uniform}
		H(c_I) = k+z.
	\end{align}
	
	Assume that $c_e$ is lost, and for $i \in [n]\backslash\{e\}$, let $w_i$ be the information sent by node $i$ to node $e$ for non-secure repair, namely, the input signal to the repair function from node $i$ (with the convention that $w_i=0$ if node $i$ does not participate in the repair). Then $w_i$ is a function of $c_i$ and $\sum_{i \in [n]\backslash \{e\}} H(w_i) = W$. Now consider any secure repair protocol, and for $i \in [n]\backslash\{e\}$, $j \in [n]$, let $v_{i,j}$ be the set of signals that are sent to node $j$ by node $i$ or sent to node $i$ by node $j$ during the protocol (with the convention that $v_{i,i}=\emptyset$). Then $w_i$ must be a function of the signals incoming to and outgoing from node $i$, namely, $H( w_i | v_{i,[n]}) = 0$, implying that
	\begin{align} \label{eq:iwv}
		I(w_i ; v_{i,[n]}) = H(w_i).
	\end{align}
	
	Let $A$ be an arbitrary set of nodes controlled by the adversary such that $i \notin A$, $|A|=z$, and let $B$ be an arbitrary set of nodes such that $i \in B$, $|B|=k$, $A \cap B = \emptyset$. We have
	\begin{align}
		I( w_i ; v_{i,A} ) & = H(w_i) - H(w_i | v_{i,A}) \nonumber \\
		& \stackrel{(a)}{=} H(w_i | c_A) - H(w_i | v_{i,A}) \nonumber \\
		& \le  H(w_i | c_A) - H(w_i | v_{i,A}, c_A) \nonumber \\
		& = I(w_i ; v_{i,A} | c_A) \nonumber \\
		& \stackrel{(b)}{\le} I(c_i ; v_{i,A} | c_A) \nonumber \\
		& \le I(c_B ; v_{i,A} | c_A) \nonumber \\
		& \stackrel{(c)}{=} I(\bm{m}; v_{i,A} | c_A) \nonumber \\
		& \stackrel{(d)}{=} I(\bm{m}; v_{i,A} | c_A) + I(\bm{m} ; c_A) \nonumber \\
		& = I(\bm{m} ; v_{i,A}, c_A) \nonumber \\
		& \stackrel{(e)}{=} 0. \label{eq:iwv0}
	\end{align}
	Here (a) follows from the fact that $w_i$ is a function of $c_i$ and by (\ref{eq:uniform}), $c_i \perp c_A$; (b) follows from the data processing inequality and the fact that $w_i$ is a function of $c_i$; (c) follows from the data processing inequality and (\ref{eq:cIJ0}), i.e., $c_B$ is a function of $\bm{m}$  given $c_A$; (d) follows from the security of the secret sharing scheme; and (e) follows from the security of the repair scheme.
	Let 
	\begin{align}
		A^* = \underset{A \subset [n]\backslash \{i\}, |A|=z}{\text{argmax}} \sum_{l \in A} H(v_{i,l}), \nonumber
	\end{align}
	and let $\bar{A}^* = [n]\backslash(\{i\} \cup A^*)$, then for $j \in \bar{A}^*$ and $j^* \in A^*$, $H(v_{i,j}) \le H(v_{i,j^*})$. We have 
	\begin{align*}
		H(v_{i, \bar{A}^* }) & \ge I( w_i ; v_{i, \bar{A}^* } | v_{i,A}  )\\
		& \stackrel{(f)}{=} I( w_i ;  v_{i, \bar{A}^* }  | v_{i,A}  ) + I( w_i ; v_{i,A} )\\
		& = I(w_i ; v_{i,[n]})\\
		& \stackrel{(g)}{=} H(w_i),
	\end{align*}
	where (f) follows from (\ref{eq:iwv0}) and (g) follows from (\ref{eq:iwv}).
	Therefore there exist $j \in \bar{A}^*$ such that $H(v_{i,j}) \ge H(w_i)/|\bar{A}^*|$ and so for $j^* \in A^*$, $H(v_{i,j^*}) \ge H(w_i)/(n-z-1)$. Therefore the amount of information transmitted and received by node $i$ is lower bounded by
	\begin{align}
		\sum_{j \in [n]} H(v_{i,j}) & \ge H(v_{i, \bar{A}^* }) + |A^*|\frac{H(w_i)}{n-z-1} \nonumber \\
		& = \frac{(n-1)H(w_i)}{n-z-1} \label{eq:lbndv}.
	\end{align}
	Summing (\ref{eq:lbndv}) over all $i \in [n]\backslash \{e\}$, it follows that the amount of information transmitted and received by nodes in $[n] \backslash \{e\}$ is at least $\frac{(n-1) W}{n-z-1}$. Since the amount of communication is counted exactly twice, i.e., when information is transmitted and when it is received, the repair bandwidth of the scheme is lower bounded by $\frac{(n-1) W}{2(n-z-1)}$. This completes the proof.
\end{proof}
	The bandwidths of Constructions \ref{con:sp2} and $\ref{con:sp3}$ are upper bounded by $\frac{(W+1)n}{n-z}$, and therefore are optimal up to a factor of approximately 2 by Theorem \ref{thm:lbnd}.

\section{Concluding Remarks}
This paper studies the problem of repairing lost shares of a secret sharing scheme without a trustworthy repair dealer. We design generic repair schemes that can securely repair \emph{any} (scalar or vector) linear secret sharing schemes. We prove a lower bound on the repair bandwidth of secure repair schemes and show that the proposed secure repair schemes achieve the optimal repair bandwidth up to a small constant factor when $n$ dominates $z$, or when the secret sharing scheme being repaired has optimal rate.

An interesting open problem is to study the secure repair bandwidth under the general repair model when the secret sharing scheme being repaired is not rate-optimal. More generally, while the tradeoff between repair bandwidth and rate has attracted significant interests under the one-round repair model, under the general repair model whether a tradeoff exists or not and how to characterize it remain open. Another interesting open problem is to study secure repair in the presence of active adversarial nodes that may deviate from the prescribed repair protocol.

\bstctlcite{IEEEexample:BSTcontrol}
\bibliographystyle{IEEEtranS}
\bibliography{securerepair}
\end{document}